\newtheorem{theorem}{Theorem}
\newtheorem{problem}{Problem}
\newtheorem{assumption}{Assumption}
\newtheorem{lemma}{Lemma}
\newtheorem{remark}{Remark}
\newtheorem{definition}{Definition}
\newcommand{\mR}{{\mathbb R}}
\newcommand{\mP}{{\mathbb P}}
\newcommand{\bF}{{\mathbf F}}
\newcommand{\bk}{{\mathbf k}}
\newcommand{\bP}{{\mathbf P}}
\newcommand{\bR}{{\mathbf R}}
\newcommand{\mX}{{\mathbb X}}
\newcommand{\mY}{{\mathbb Y}}
\newcommand{\bM}{{\mathbf M}}
\newcommand{\bX}{{\mathbf X}}
\newcommand{\bx}{{\mathbf x}}
\newcommand{\by}{{\mathbf y}}
\newcommand{\bg}{{\mathbf g}}
\newcommand{\bu}{{\mathbf u}}
\newcommand{\bU}{{\mathbf U}}
\newcommand{\bPsi}{{\mathbf \Psi}}
\newcommand{\bs}{{\mathbf s}}
\newcommand{\bff}{{\mathbf f}}
\newcommand{\bc}{{\mathbf c}}
\newcommand{\brho}{{\boldsymbol{\rho}}}
\begin{document}

\title{Data-Driven Convex Approach to Off-road Navigation via Linear Transfer Operators}
\author{Joseph Moyalan, Yongxin Chen and Umesh Vaidya
\thanks{Financial support from NSF under grants 1942523, 2008513, 2031573 and NSF CPS award 1932458 is greatly acknowledged. J. Moyalan and U. Vaidya are with the Department of Mechanical Engineering, Clemson University, Clemson, SC; {\{jmoyala,uvaidya\}@clemson.edu}. Y.\ Chen is with the School of Aerospace Engineering,
Georgia Institute of Technology, Atlanta, GA; {\{yongchen\}@gatech.edu}}
}

\maketitle

\begin{abstract}
We consider the problem of optimal control design for navigation on off-road terrain. We use a traversability measure to characterize the difficulty of navigation on off-road terrain. The traversability measure captures terrain properties essential for navigation, such as elevation maps, roughness, slope, and texture. The terrain with the presence or absence of obstacles becomes a particular case of the proposed traversability measure. We provide a convex formulation to the off-road navigation problem by lifting the problem to the density space using the linear Perron-Frobenius (P-F) operator. The convex formulation leads to an infinite-dimensional optimal navigation problem for control synthesis. We construct the finite-dimensional approximation of the optimization problem using data. We use a computational framework based on the data-driven approximation of the Koopman operator. This makes the proposed approach data-driven and applicable to cases where an explicit system model is unavailable. Finally, we apply the proposed navigation framework with single integrator dynamics and Dubin's car model.
\end{abstract}
\begin{IEEEkeywords}
Motion and Path Planning, Optimization and Optimal Control, Model Learning for Control
\end{IEEEkeywords}

\section{Introduction}
\IEEEPARstart{N}{avigation} problem is one of the most critical research fields in the robotics community. More recently, the problem of off-road navigation, driven by robotics applications in an unstructured environment, has received much attention. The objective is to drive a robot/vehicle from some initial set to the desired target set through a terrain where traversability varies continuously over the entire domain of interest. This is in contrast to navigation in the presence of obstacles where the regions with obstacles are prohibited and hence not traversable. There is extensive literature on navigation in the presence of obstacles. Navigation function and potential function are used for navigation in the presence of obstacles \cite{hirsch2012differential,koditschek1990robot,khatib1985real,latombe2012robot}. While the potential function could have local minima preventing the navigation from initial set to the target, the navigation function is hard to find. The control barrier functions (CBFs) are also used for navigation with safety constraints \cite{ames2019control}. CBFs combine ideas from the control Lyapunov function and barrier certificates for invariance to ensure safety. However, finding CBFs suffer from the same challenges as finding control Lyapunov function and cannot be easily adapted for navigation in off-road terrain where the definition of safety itself is nebulous. 


The problem of off-road navigation has attracted more interest recently. In \cite{langer1994behavior}, perception is used to determine the terrain traversability and local control strategy for navigation. Most of the current literature on this topic has been using sensor data from LIDAR, cameras, and GPS/IMUs to map the off-road terrain to generate a traversability map \cite{overbye2020fast,gao2019off,yang2020multi,hussein2016autonomous}. An existing algorithm such as A$^\star$ is used to design traversable paths in the off-road environment. However, due to the nonconvex nature of the traversability map and hence the cost, the problem becomes nonconvex and, therefore, difficult to solve with no guarantee of global optimality.


One of this paper's main contributions is providing a convex formulation to the off-road navigation problem. The convex formulation is made possible by transforming the problem in the dual space of densities. The formulation of optimal control problem in the dual space of densities is proposed in \cite{huang2022convex,moyalan2023data}, and its extension to navigation problem in the presence of deterministic and stochastic obstacles is studied in \cite{vaidya2018optimal,yu2022data,moyalan2022navigation}. This paper focuses on the off-road navigation problem for a given traversability map. The terrain traversability map includes information about the difficulty level in navigating.
The terrain's traversability measure depends on terrain parameters such as elevation map, roughness, slope, and texture. Therefore, we have utilized the normalized elevation map while constructing the traversability measure. 

The convex formulation leads to an infinite-dimensional convex optimization problem for the off-road navigation problem. First, we use data to construct the finite-dimensional approximation of the infinite-dimensional convex problem. Then, we use a computational framework based on the data-driven approximation of linear Koopman and Perron-Frobenius (P-F) operators for the finite-dimensional approximation of the infinite-dimensional convex optimization problem. The second main contribution is providing a numerically efficient computational algorithm for the data-driven approximation of the P-F operator, preserving some natural properties of this operator.  
Finally, we demonstrate the application of the developed framework for off-road navigation of vehicle dynamics with the Dubin car model. We also compare the results obtained using our proposed approach with the existing A$^\star$ algorithm. The study's main finding is that the traversability cost associated with A$^\star$ is more than one computed using our proposed approach. 

The rest of the paper is structured as follows. Section II  consists of problem formulation, and we discuss the main results in Section III. Then, in Section  IV,  we develop the computational framework based on the linear operator framework. Finally, we present the simulation results in Section V, and a conclusion is in Section VI.
\section{Problem Formulation for Off-road Navigation }
This section defines the traversability map, which will be later used in the convex formulation of the navigation problem. We will also motivate the choice of the cost function for off-road navigation. Let us consider the following dynamical system in control affine form as 
\begin{align}
    \dot \bx={\bf f}(\bx)+\bg(\bx)\bu\label{eq:sys_control}
\end{align}
where $\bx\in \bX\subset \mR^n$ and $\bu\in \bU\subset \mR^m$ are the states and control input respectively. We assume that $\bff(\bx), \bg(\bx) \in \mathcal{C}^1(\bX,\mR^n)$, i.e. the space of continuously differentiable functions on $\bX$. The dynamical control system is assumed to model the control dynamics of the vehicle. The control-affine form is not restrictive, and this will typically be the case for the robotics and vehicle dynamics application \cite{xiao2021learning,gregory2021improving}.

\noindent {\bf Notations:} We consider $\mathcal{B}(\bx)$ to be the Borel $\sigma$-algebra on $\bX$ and ${\cal M}(\bX)$ as the vector space of real-valued measures on ${\cal B}(\bX)$. Let $\mathcal{L}_{\infty}(\bX)$ and $\mathcal{L}_1(\bX)$ be the space of essentially bounded and integrable functions on $\bX$ respectively. The notations in bold and lower case will represent vectors and notations in bold and upper case will represent matrices. Also, $\bs_t(\bx)$ is the notation for the trajectory of feedback system $\dot \bx=\bff(\bx)+\bg(\bx)\bk(\bx)$ starting from initial condition $\bx$ at time $t\in \mR$, where $\bu=\bk(\bx)\in \mathcal{C}^1(\bX,\mR^m)$ is the feedback input. Similarly, $\bs_{-t}(\bx)$ represents the closed-loop trajectory as the function of initial condition $\bx$ backward in time.
\subsection{Traversability Map}\label{trav_map}
We assume that the traversability description of the terrain is captured by a nonnegative function $b(\bx)\in {\cal L}_1(\bX)$. We assume that the function $b(\bx)$ captures the information of the elevation map, terrain roughness, slope, and terrain texture. The construction of such a map is an active area of interest where onboard sensors on the vehicles such as vision, LIDAR, and IMU, as well as drone sensory images, can be used to construct such maps \cite{huang2021decentralized,armbrust2011ravon,bagheri2017development}. We propose the following definition of traversability measure, which captures the relative degree of difficulty of traversing unstructured terrain.
\begin{definition}
Let $\mu_b \in \mathcal{M}(\bX)$ be the associated traversability measure, i.e.,  $d \mu_b(\bx) = b(\bx)d\bx$, where $b(\bx)\geq 0$ is assumed to be an integrable function and is zero on the final target set, $\bX_T$. For any set $A \in \mathcal{B}(\bX)$, the traversability  of the set $A$ is defined using $b(\bx)$ as
\begin{align}
    {\rm Trav}(A)&:=\int_A b(\bx)d\bx=:\mu_b(A).\label{bx_define}
\end{align}

\end{definition}
${\rm Trav} (A)$ captures the relative difficulty of traversing through the region $A\subset \bX$. In particular, if  $\mu_b(A_1)<\mu_b(A_2)$
where $A_i\in {\cal B}(\bX)$, then the region $A_2$ is more difficult to traverse than region $A_1$.  It is easy to see that the above definition of traversability measure also captures the information of binary obstacles. 
 In particular, if $\bX_u$ is an obstacle set, then we can describe it using
\begin{align}
    b(\bx)=\frac{1}{\lambda(\bX_u)}\mathds{1}_{\bX_u}(\bx)
\end{align}
where $\lambda(\cdot)$ is the Lebesgue measure and $\mathds{1}_{\bX_u}$ is the indicator function of the set $\bX_u$. The main objective of this paper involves determining the control inputs $\bu$ to navigate the vehicle dynamics from some initial state $\bX_0$ to some final target set $\bX_T$ while keeping the traversability cost below some threshold, say $ \gamma$, i.e., 
\begin{align}\label{trav_constraint}
    \int_0^\infty b(\bx(t)) dt\leq \gamma
\end{align}
where $\bx(t)$ is the trajectory of the control system (\ref{eq:sys_control}). In this paper, we are interested in the asymptotic navigation problem, where the objective is to find the shortest distance path to the target and the control cost. In particular, we consider the following cost function
\begin{align}\label{cost_1}
  \min_{\bu}\;\; V(\bx)=  \min_{\bu}\;\int_0^\infty q(\bx(t))+\bu^\top \bR\bu dt.
\end{align}
where $q(\bx)$ is the distance function which is zero at the target set $\bX_T$, and $\bR>0$ is the positive definite matrix. Instead of minimizing the cost function from every initial condition $\bx$ as in (\ref{cost_1}), our proposed convex formulation relies on minimizing the following cost function averaged over all states $\bx\in \bX_0$. 
 \begin{eqnarray}
\min_{\bu} \;\;J(\mu_0)=\min_{\bu}\;\;\int_{\bX} V(\bx)d\mu_0(\bx) \label{eq:control_cost}
\end{eqnarray}
where $\mu_0$ is the measure capturing the distribution of the initial state. In particular, for the initial state of the vehicle in set $\bX_0$, we have measure $\mu_0$ supported on set $\bX_0$. The form of the cost function where $V(\bx)$ is averaged over the state $\bx\in \bX_0$ plays a fundamental role in the convex formulation of optimal navigation problem in the space of density.

In the rest of the paper, we will assume that $\mu_0$ is absolutely continuous with density function $h_0$, i.e.,$\frac{d \mu_0}{d \bx} = h_0(\bx)$. For example if $\mu_0$ is supported on initial set then $h_0(\bx)=\mathds{1}_{\bX_0}(\bx)$ i.e., indicator function of set $\bX_0$. The objective is to find the feedback controller $\bk(\bx)$ to minimize the cost function in (\ref{eq:control_cost}). Appropriate conditions on the initial measure $\mu_0$ are necessary to ensure the cost function is finite. We make sure that the density function $h_0$ is finite and positive semi-definite on $\bX$ and $h_0 \in {\cal L}_1(\bX)\cap{\cal C}^1(\bX)$.

Along with minimizing the cost function, it is also of interest to avoid certain obstacle sets, $\bX_u$, and limit the control authority, i.e., $|u_j|\leq L_j$. The obstacle avoidance constraints for almost every trajectory starting from the initial set $\bX_0$ are written as 
\[\int_{\bX} \mathds{1}_{\bX_u}(\bx(t))d\mu_0(\bx)=0,\;\;\forall t\geq 0 \]
where $\bx(t)$ is the solution of system (\ref{eq:sys_control}) starting from initial condition $\bx$. 
With the above definition, we can state the problem statement for optimal off-road navigation using the traversability map as given below.


\begin{problem}\label{problem_statement2} (Optimal off-road Navigation Problem)  Navigate almost every system trajectory for  (\ref{eq:sys_control}) starting from the initial set $\bX_0$ to the target set $\bX_T$ while avoiding the obstacle set $\bX_u$ such that cost for traversing is kept below some threshold ($\gamma$) and the following cost is minimized. 
\begin{subequations}\label{problem_B}
    \begin{alignat}{5}
    \min_{\bu}&\;\;\; \int_{\bX}\int_0^\infty \left(q(\bx(t))+\bu^\top \bR\bu \right)dt d\mu_0(\bx)\label{problem_a}\\
    {\rm s.t.}&\;\;\; \;\int_{\bX}\mathds{1}_{\bX_u}(\bx(t))d\mu_0(\bx)=0,\;\;\forall t\geq 0 \label{problem_b}\\
      & \int_{\bX} \int_0^\infty b(\bx(t)) dtd\mu_0(\bx)\leq  \gamma \label{problem_c} \\
      &|u_j|\leq L_j,\;\;\;j=1,\ldots m.\label{problem_d}\\
       & \dot \bx ={\bf f}(\bx)+\bg(\bx)\bu,\;\;\;\;\lim_{t\to \infty} \bx(t)\in \bX_T. \label{problem_e}
       \end{alignat}
\end{subequations}
where $q(\bx)$ is the distance from $\bx$ to the target set $\bX_T$. So the objective is to find the shortest path to the target set while keeping the traversability cost below a certain threshold $\gamma$. 
\end{problem}

The following section proves that the optimal off-road navigation problem, defined in Problem 1, can be written as a convex optimization over the density space. 
\section{Convex Formulation of Optimal off-Road Navigation}\label{main_section}
We make the following assumption on the control dynamical system and the controller for the system \eqref{eq:sys_control}. 
\begin{assumption}\label{assume1} For the optimal off-road navigation problem, we assume that the optimal control input is feedback in nature, i.e., $\bu=\bk(\bx)\in {\cal C}^1(\bX)$, such that the cost function corresponding to this input is finite. 
\end{assumption}
With the above assumption, we can write the feedback control system in the form
\begin{align}
    \dot \bx={\bf f}(\bx)+\bg(\bx)\bk(\bx)=: \bF_c(\bx)\label{closed_sys}.
\end{align}
Now, we will consider the following definition of almost everywhere (a.e.) stability as introduced in \cite{vaidya2008lyapunov}.
\begin{definition}\label{def_aeuniformstable}[Almost everywhere (a.e.) stability]  The equilibrium point or an attractor set of the system (\ref{closed_sys}) represented by $A$ is said to be a.e. stable w.r.t. measure $\mu_0\in {\cal M}(\bX)$ if 
\begin{eqnarray}
\mu_0\{\bx\in \bX: \lim_{t\to \infty} \bs_t(\bx)\notin A\}=0.\label{eq_aeunifrom}
\end{eqnarray}
\end{definition}
This paper's main results ensure that the feedback control obtained for the navigation is a.e. stable with respect to the target set of \eqref{closed_sys}. We next introduce the following definitions of linear operators \cite{lasota1998chaos}.
\begin{definition}[Koopman Operator]$\mathds{U}_t:{\cal L}_{\infty}(\bX)\rightarrow{\cal L}_{\infty}(\bX)$ for (\ref{closed_sys}) is given by
\begin{align}
    [\mathds{U}_t \phi](\bx) = \phi(\bs_t(\bx)), \label{eq:def_Koopman_operator}
\end{align}
where $\phi$ is a test function in the lifted function space $\mathcal{L}_{\infty}(\bX) \bigcap \mathcal{C}^1(\bX)$. The Koopman generator for \eqref{closed_sys} is given by
\begin{align}
    \lim_{t\to 0}\frac{[\mathbb{U}_t\phi](\bx)-\phi(\bx)}{t}={\bF_c}(\bx)\cdot \nabla \phi(\bx)=:{\cal U}_{\bF_c} \phi .\label{eq:def_Koopman_generator}
\end{align}
\end{definition}
\begin{definition}[Perron-Frobenius Operator] $\mathds{P}_t:{\cal L}_1(\bX)\to {\cal L}_1(\bX)$ for (\ref{closed_sys}) is given by
\begin{align}
    [\mathds{P}_t\varphi](\bx) = \varphi(\bs_{-t}(\bx)) \left| \frac{\partial \bs_{-t}(\bx)}{\partial \bx}\right|, \label{eq:def_PF_operator}
\end{align}
where $|.|$ stands for the determinant and $\varphi$ is a test function. The P-F generator for \eqref{closed_sys} is given by
\begin{align}
    \lim_{t\to 0}\frac{[\mathbb{P}_t\varphi](\bx)-\varphi(\bx)}{t}= -\nabla \cdot (\bF_c(\bx)\varphi(\bx))=:{\cal P}_{\bF_c} \varphi .\label{eq:def_PF_generator}
\end{align}
\end{definition}
These two operators are dual to each other as
\begin{align}
    \int_{\bX}[\mathbb{U}_t \phi](\bx)\varphi(\bx)d\bx=
\int_{\bX}[\mathbb{P}_t \varphi](\bx)\phi(\bx)d\bx. \label{eq:duality}
\end{align}
These two operators enjoy positivity and Markov properties used in the finite-dimension approximation.
\subsection{Obstacle avoidance constraints}
The first result of this paper allows us to write the obstacle avoidance constraints (\ref{problem_b}) in the integral form.   
\begin{lemma}\label{lemma_1}
For the dynamical system (\ref{closed_sys}), if
\begin{align}
\int_{0}^\infty \int_{\bX} \mathds{1}_{\bX_u}(\bs_t(\bx))d\mu_0(\bx)dt=0,\label{lemmaeq1}
\end{align} then 
\begin{align}
    \int_{\bX} \mathds{1}_{\bX_u}(\bs_t(\bx))h_0(\bx)d\bx=0,\;\;\;\forall t\geq 0,\label{contra}
\end{align}\label{lemma:time_unsafeset}
i.e., the amount of time system trajectories spends in the region $\bX_u$ starting from the positive measure set of initial condition corresponding to the initial set, $\bX_0$, with density $h_0(\bx)$ is equal to zero.
\begin{proof}
Proof by contradiction. 
Assume (\ref{contra}) is not true, i.e., there exists some time $t_0$ for which 
\[\int_{\bX} \mathds{1}_{\bX_u}(\bs_{t_0}(\bx))h_0(\bx)d\bx=\int_{\bX}[ \mathbb{U}_{t_0}\mathds{1}_{\bX_u}](\bx)h_0(\bx)d\bx>0\]
Then using the continuity property of the Koopman semi-group, we know there exists a $\Delta$ such that 
\[\int_{t_0}^{t_0+\Delta}\int_{\bX}[ \mathbb{U}_{t_0}\mathds{1}_{\bX_u}]((\bx))h_0(\bx)d\bx dt>0.\]
We have 
{\small
\begin{equation}
\begin{aligned}
    0<\int_{t_0}^{t_0+\Delta}\int_{\bX}[ \mathbb{U}_{t_0}\mathds{1}_{\bX_u}](\bx)h_0(\bx)d\bx dt\leq\\\int_{0}^{\infty}\int_{\bX}[ \mathbb{U}_{t_0}\mathds{1}_{\bX_u}](\bx)h_0(\bx)d\bx dt=0
\end{aligned}
\end{equation}}
\end{proof}
\end{lemma}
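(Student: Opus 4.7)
The plan is to proceed by contradiction, using the Koopman reformulation of the indicator integrand together with nonnegativity and a continuity-in-$t$ argument for the Koopman evolution. The starting observation is that $\mathds{1}_{\bX_u}(\bs_t^c(\bx)) = [\mathbb{U}_t^c \mathds{1}_{\bX_u}](\bx)$ by the definition of the Koopman operator in \eqref{eq:def_Koopman_operator}, and that $\mathds{1}_{\bX_u} \geq 0$, so by the positivity property \eqref{positive_prop} the integrand $[\mathbb{U}_t^c \mathds{1}_{\bX_u}](\bx) h_0(\bx)$ is nonnegative pointwise in both $\bx$ and $t$. This nonnegativity is what lets a hypothetical violation at one time $t_0$ be amplified into a contradiction with the zero value of the double integral.

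Concretely, I would suppose that \eqref{contra} fails, i.e.\ there is some $t_0 \geq 0$ with
\[
\int_{\bX} [\mathbb{U}_{t_0}^c \mathds{1}_{\bX_u}](\bx) h_0(\bx)\, d\bx = \varepsilon > 0.
\]
Next, I would argue that the map $t \mapsto \int_{\bX} [\mathbb{U}_t^c \mathds{1}_{\bX_u}](\bx) h_0(\bx)\, d\bx$ is continuous at $t_0$, so that there is a small window $[t_0, t_0+\Delta]$ on which the integral stays bounded below by $\varepsilon/2$. Integrating this strictly positive lower bound in $t$ over $[t_0, t_0+\Delta]$ gives a strictly positive quantity, which by nonnegativity of the integrand is bounded above by $\int_0^\infty \int_{\bX}\mathds{1}_{\bX_u}(\bs_t^c(\bx))\, d\mu_0(\bx)\, dt$. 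The hypothesis \eqref{lemmaeq1} forces that upper bound to be zero, producing the desired contradiction and establishing \eqref{contra}.

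The main obstacle, and the step I would spend the most care on, is justifying the continuity-in-$t$ claim, because the test function $\mathds{1}_{\bX_u}$ is not $\mathcal{C}^1$ and therefore does not sit inside the natural domain of the Koopman generator used elsewhere in the paper. I would handle this by appealing to the dual Perron–Frobenius picture and \eqref{eq:duality}: writing
\[
\int_{\bX}[\mathbb{U}_t^c \mathds{1}_{\bX_u}](\bx)h_0(\bx)\, d\bx = \int_{\bX_u}[\mathbb{P}_t^c h_0](\bx)\, d\bx,
\]
and invoking strong continuity of the P--F semigroup on ${\cal L}_1(\bX)$, so that $t\mapsto \mathbb{P}_t^c h_0$ is continuous in $\|\cdot\|_{{\cal L}_1}$ and the integral over $\bX_u$ depends continuously on $t$. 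Alternatively one can argue directly using continuity of the flow $\bs_t^c(\bx)$ in $t$ together with dominated convergence, provided the boundary $\partial \bX_u$ is negligible in the pushforward of $\mu_0$; either route closes the gap and completes the contradiction argument outlined above.
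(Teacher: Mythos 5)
Your proposal follows essentially the same route as the paper's own proof: argue by contradiction at some $t_0$, use continuity of $t\mapsto\int_{\bX}[\mathbb{U}_t^c\mathds{1}_{\bX_u}](\bx)h_0(\bx)\,d\bx$ to extract a strictly positive contribution over a window $[t_0,t_0+\Delta]$, and use nonnegativity of the integrand to bound that contribution by the vanishing double integral \eqref{lemmaeq1}. Your extra care in justifying the continuity step (via strong continuity of the P--F semigroup on $\mathcal{L}_1$, since $\mathds{1}_{\bX_u}$ is not $\mathcal{C}^1$) fills in a detail the paper leaves implicit, but it is a refinement of, not a departure from, the paper's argument.
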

\begin{remark}
Following the results of Lemma \ref{lemma_1}, we can replace the obstacle avoidance constraints (\ref{problem_b}) as 
\begin{align}
\int_{0}^\infty \int_{\bX} \mathds{1}_{\bX_u}(\bs_t(\bx))d\mu_0(\bx)dt=0
\end{align} 
\end{remark}
\begin{remark}
In the rest of the paper, we will use the notation $\bX_1=\bX\setminus {\cal N}_\epsilon$, where ${\cal N}_\epsilon$ is the $\epsilon$ neighborhood of the target set. With no loss of generality, we assume the target set $\bX_T$ is locally stable with the domain of attraction containing ${\cal N}_\epsilon$. 
\end{remark}
\subsection{Convex formulation}
The following theorem presents the main result of this paper. 
\begin{theorem}\label{theorem3}
Under Assumption \ref{assume1}, the optimal off-road navigation problem can be written as the following convex optimization problem in terms of optimization variables 
$\rho \in{\cal L}_1(\bX_1)\cap {\cal C}^1(\bX_1,\mR_{\geq 0})$ and $\bar \brho\in{\cal C}^1(\bX_1,\mR)$
\begin{subequations}\label{convex_formulation}
\begin{align}
    J^\star=&\inf_{\rho,\bar \brho}\int_{\bX_1} q(\bx)\rho(\bx)+\frac{\bar{\brho}(\bx)^\top \bR\bar{\brho}(\bx)}{\rho(\bx)} \;d\bx \label{ocp_dens}\\
{\rm s.t}.\;\;\;
&\int_{\bX_1}\mathds{1}_{\bX_u}(\bx)\rho(\bx)d\bx=0 \label{feas223}\\
&\int_{\bX_1}b(\bx)\rho(\bx)d\bx\leq \gamma\label{feas222}\\
&|\bar{\rho}_j(\bx)| \le L_j \rho(\bx) \label{feas224}\\
&\nabla\cdot(\bff \rho+\bg \bar \brho)=h_0,\;\;{\rm a.e.}\; \bx\in \bX_1\label{feas221}
\end{align}
\end{subequations}
The solution to the above optimization problem is used to recover the optimal feedback control input as follows:
\begin{align}
    \bk^\star(\bx) = \frac{\bar{\brho}^\star(\bx)}{\rho^\star(\bx)} \label{optimal_control}
\end{align}
where $(\rho^\star,\bar{\brho}^\star)$ are the solution of \eqref{convex_formulation}.
\end{theorem}
The proof is provided in the Appendix.

\begin{remark}
The convex structure of the optimization problem presented in \eqref{convex_formulation} comes from the fact that the decision variables $\rho$ and $\bar{\brho}$ enter the cost quadratically and the constraints linearly and hence led to the infinite-dimensional convex optimization problem. We discuss the data-driven approach for the finite-dimensional approximation of the infinite-dimensional problem in Section~\ref{comp_framework}.  
\end{remark}
\subsection{Control constraints}
Other constraints on the state and control input can be written convexly in terms of the optimization variables $\rho$ and $\bar \brho$. One such constraint is the curvature constraint. For example, consider a Dubin's car model
\[\dot x_1=u_1 \cos \theta,\;\;\dot x_2=u_1 \sin \theta ,\;\;\dot \theta=u_2.\]
Aside from the kinematic constraints imposed by nonholonomy, most often, the additional constraint on the radius of curvature of the paths of the vehicle must be considered \cite{balluchi1996path}. The curvature constraints themselves will be a function of the terrain properties. The curvature constraints close the kinematic model of Dubin's car to reality. We can capture the curvature constraints as follows.
\begin{align}
    \frac{|u_2|}{|u_1|} \le \frac{1}{\mathds{C}} \label{curvature_const}
\end{align}
where $u_2$ is the angular velocity of Dubin's car model. The constraint in \eqref{curvature_const} ensures that the control design is realistic and that the angular velocity of the kinematic model is bounded. The \eqref{curvature_const} can be easily added to the optimization problem of theorem \ref{theorem3} convexly as follows. Following the general formulation specialized for two input systems, we can write $u_1=\frac{\bar \rho_1}{\rho}$ and $u_2=\frac{\bar \rho_2}{\rho}$, hence
\begin{align}
    \frac{|u_2|}{|u_1|} \le \frac{1}{\mathds{C}} \implies \mathds{C} |\bar \rho_2|-|\bar \rho_1|\leq 0
\end{align}
which is linear and hence convex in $\bar\rho_1$ and $\bar \rho_2$.
\section{Computational Framework}\label{comp_framework}
In this section, we utilize Naturally structured dynamic mode decomposition (NSDMD) \cite{huang2018data} and provide a modification approximation, namely Approximate NSDMD. Then we formulate our problem with the given computational framework.
\subsection{Data-Driven Approximation: Approximate NSDMD}\label{2b}
The NSDMD algorithm introduced in \cite{huang2018data} incorporates the natural properties of the linear operators, namely the positivity and the Markov properties. However, using the NSDMD algorithm is computationally expensive. Also, executing the NSDMD algorithm to find the P-F operator involves convex optimization with linear constraints and does not admit an analytical solution. Our proposed transfer operator theoretical framework for the control design relies on these operators' positivity and Markov properties. In our proposed modification of the NSDMD algorithm, we gain numerical efficiency at the expense of preserving these properties approximately. We call this modified algorithm ``Approximate NSDMD (A-NSDMD),'' and it can be solved analytically as it is a least square problem, as shown below. For the continuous-time dynamical system (\ref{eq:sys_control}), consider snapshots of the data set obtained as time-series data from single or multiple trajectories.
\begin{eqnarray}
{\mX}= [\bx_1,\bx_2,\ldots,\bx_Q],\;\;\;\;{\mY} = [\mathbf{y}_1,\mathbf{y}_2,\ldots,\mathbf{y}_Q] ,\label{data}
\end{eqnarray}
where $\bx_i\in \bX$ and $\mathbf{y}_i\in \bX$ such that $\mathbf{y}_i=\bs_{\Delta t}(\bx_i)$. In our proposed data-driven computation, we obtain the pair of two consecutive snapshots using the system with no control input $(\bs^0_{\Delta t}(\bx_i))$ and with the step input $(\bs^1_{\Delta t}(\bx_i))$. 
Let ${\bPsi}=[\psi_1,\ldots,\psi_P]^\top$ be the set of basis functions.
We obtain the finite-dimensional approximation of the Koopman operator ($\bf U$) through EDMD as the result of the following least square problem. 
\begin{equation}\label{edmd_op}
\min\limits_{\bf U}\parallel {\bf G}{\bf U}-{\bf A}\parallel_F,
\end{equation}
\begin{eqnarray}\label{edmd1}
{\bf G}=\frac{1}{Q} \bPsi({\mX}) \bPsi({\mX})^\top,\;\;\;
{\bf A}=\frac{1}{Q} \bPsi({\mX}) \bPsi({\mY})^\top,
\end{eqnarray}
with ${\bf U},{\bf G},{\bf A}\in\mathbb{R}^{P\times P}$, $\|\cdot\|_F$ stands for Frobenius norm. The analytical solution of the above least square problem is  
\begin{eqnarray}
\bU=\bf{G}^\dagger \bf{A}\label{edmd_formula}.
\end{eqnarray}
The EDMD algorithms provide convergence results with respect to the number of data points and basis functions \cite{korda2018convergence,klus2020eigendecompositions}.
In this paper, we work with Gaussian Radial Basis Functions (RBF), which are the positive basis. The positivity constraints utilized in NSDMD are avoided in the A-NSDMD based on the approximated positivity of the Koopman operator when RBFs are used. This means that all the negative elements of the Koopman matrix obtained are 1e-6 or lower. Under the assumption that the basis functions are positive, the A-NSDMD obtains $\hat \bU$ from $\bU$ using row normalization, i.e., the entries of the matrix $\hat \bU$ are obtained as:
\begin{align}
    [\hat {\bU}]_{ij}=\frac{[\bU]_{ij}}{\sum_j[\bU]_{ij}} \label{pos}
\end{align}
where $[\hat\bU]_{ij}$ is the $(i,j)^{th}$ entry of the matrix $\hat \bU$. The above modification helps us to avoid Markov constraints. As a result, the A-NSDMD is just a least square problem with an analytical solution as given by \eqref{edmd_formula}. We obtain the P-F matrix as  
${\hat{\bf P}} = \hat{\bf U}^{\top}$. The generator of the P-F operator is obtained as
\begin{eqnarray}
{\cal P}_{\bff}\approx \frac{\hat{\bf P}-{\bf I}}{\Delta t}=:\bM.\label{PF_approximation}
\end{eqnarray}
where $\textbf{I}$ is the identity matrix. Note that we did not strictly enforce the positivity property on the linear operators in the above construction. However, the numerical evidence suggests that the entries of the matrix $\bU$ obtained using the EDMD algorithm with a positive basis function are predominantly positive. Table~\ref{table:comp time} compares the computation time of calculating the P-F operator for different numbers of RBFs utilized in lifting the dynamics. This provides empirical proof of the computational efficiency of A-NSDMD over NSDMD.
\subsection{ Approximation of  Optimization Problem}
This section discusses finite-dimensional approximation of the infinite-dimensional optimal navigation problem (\ref{ocp_dens})-(\ref{feas222}). We use the approximation of the generator for the vector field $\bff$ and $\bg$. The generator approximation of the two vector fields is given as follows
\begin{align}
    \mathcal{P}_{\bff} \approx \textbf{M}_0,\;\;\;\; \mathcal{P}_{\bg} \approx \textbf{M}_1. \label{gen_approx}
\end{align}
The generator approximation of $\mathcal{P}_{\bff}$ starts with obtaining open-loop time series data $\{\bx_k^0\} = \{\bx_0^0,\bx_1^0,\dots,\bx_Q^0\}$ by substituting $\bu={\bf0}$ in \eqref{eq:sys_control}. Similarly, the approximation of $\mathcal{P}_{\bg_i}$ is obtained by getting $\{\bx_k^i\} =\{\bx_0^i,\bx_1^i,\dots,\bx_Q^i\}$  from substituting $\bu=\bu_i$ in \eqref{eq:sys_control} where $\bu_i$ is a column vector consisting of all zeros except at $i^{th}$ position where the element value is 1. Then we calculate the matrices $\hat{\bU}^0$ and $\hat{\bU}^i$ by utilizing \eqref{edmd_op}-\eqref{pos}. The P-F matrix is then obtained as $\hat{\bP}^0= \hat{\bU}^{0\top}$ and  $\hat{\bP}^i= \hat{\bU}^{i\top}$. Then we calculate $\textbf{M}_0$ and $\textbf{M}_i$ as follows:
\begin{align}
{\cal P}_{\bff}\approx \frac{\hat{\bf P}^0-{\bf I}}{\Delta t}=:\bM_0,\;\; {\cal P}_{\textbf{f+g$_i$}} \approx \frac{\hat{\bf P}^i-{\bf I}}{\Delta t}
\end{align}
\begin{align}
{\rm and}\;\;{\cal P}_{\textbf{g$_i$}} = {\cal P}_{\textbf{f+g$_i$}} - {\cal P}_{\bff} =: \bM_i
\end{align}
Approximate NSDMD algorithm outlined in Section \ref{2b} is used to approximate with $\bPsi(\bx) = [\psi_1(\bx),\psi_2(\bx),\dots,\psi_P(\bx)]^\top$ as the basis functions. Let $\bar{\brho}(\bx) = [\bar{\rho}_1(\bx),\dots,\bar{\rho}_m(\bx)]^\top$ and ${\bf w} = [{\bf w}_1,\dots,{\bf w}_m]^\top $. Now let us express the following terms as combinations of basis functions
\begin{align}
    h_0 = \bPsi^\top {\bf m},\; \rho = \bPsi^\top {\bf v},
    \;{\bar{\rho}}_j = \bPsi^\top {\bf w}_j. \label{term_approx}
\end{align}
Assuming $\bR$ is the identity matrix, we perform the following finite-dimensional approximation
\begin{subequations}
\begin{align}
    \inf_{\rho,\bar \brho}&\int_{\bX_1}\left[q(\bx)\rho(\bx)+\frac{\bar{\brho}(\bx)^\top\bar{\brho}(\bx)}{\rho(\bx)} \;\right]d\bx \approx \nonumber\\
    &\underset{{\bf w}_j,{\bf v}}{\min}\;{\bf D_1} {\bf v}+{\bf w}^\top{\bf D_2} \frac{{\bf w}}{{\bf v}}
\end{align}
\end{subequations}
where ${\bf D}_1 = \int_{\bX_1}q(\bx)\bPsi^\top(\bx)d\bx$ , ${\bf D}_2 = \int_{\bX_1} \bPsi(\bx)\bPsi^\top(\bx) \;d\bx$ and division is assumed element-wise.

Similarly, we can write
\begin{align}
    \int_{\bX_1}\mathds{1}_{\bX_u}(\bx)\rho(\bx)d\bx\approx  \left[\int_{\bX_1} \mathds{1}_{\bX_u}(\bx)\bPsi^\top(\bx)d\bx \right]\bf v = {\mathbf{d_1}} \bf v\\
        \int_{\bX_1}b(\bx)\rho(\bx)d\bx\approx  \left[\int_{\bX_1}b(\bx)\bPsi^\top(\bx)d\bx \right]\bf v = {\mathbf{d_2}} \bf v
\end{align}
where ${\mathbf{d_1}}:=\int_{\bX_1}\mathds{1}_{\bX_u}(\bx)\bPsi^\top d\bx$ and ${\mathbf{d_2}}:=\int_{\bX_1}b(\bx)\bPsi^\top d\bx$.
\begin{assumption}\label{assume_gaussian}
We consider all the basis functions to be positive and linearly independent. 
\end{assumption}
\begin{remark}
\label{remark_positivebasis}
The Gaussian Radial Basis Functions (RBF) is used in this paper for the simulation results i.e., $\psi_k(\bx)=\exp(-\frac{\parallel \bx-{\bf c}_k\parallel^2}{2\sigma^2})$
where $\bc_k$ is the center of the $k^{th}$ Gaussian RBF. 
\end{remark}
Therefore the ONP in Theorem \ref{theorem3} can be written as 
\begin{eqnarray}
&\underset{{\bf w}_j,{\bf v}}{\min}\;{\bf D_1} {\bf v}+{\bf w}^\top{\bf D_2} \frac{{\bf w}}{{\bf v}} \nonumber\\
\text{s.t.}&-\bM_0{\bf v}-\sum_j \bM_j{\bf w}_j=\bf m\nonumber\\
&\mathbf{d_1} {\bf v} = 0,\;\;\;
\mathbf{d_2} {\bf v}\leq \gamma,\;\;\;
{|{\bf w}_j|} \le L_j \bf v \nonumber
\end{eqnarray}
\section{Simulation Results}
The centers of the RBF used in the simulation results are chosen to be uniformly distributed in the state space at a distance of $d$. The RBF's standard deviation ($\sigma$) is selected as $0.5\;d$. We perform all the simulations using MATLAB R2021b on a Dell computer with 64 GB Random Access Memory (RAM) and an Intel(R) i7-10700K processor (3.80 GHz). In this paper, we use the data to do the linear approximation of the single integrator dynamics and Dubin's car model. The CVX toolbox is used for solving an optimization problem. The computational time to obtain the proposed algorithm's feedback control ranges from 5-15 minutes.\\

\noindent \textbf{Example 1.} Consider the single integrator dynamics 
\begin{align}
    \dot{x}_1 = u_1, \;\;\; \dot{x}_2 = u_2  \label{eq:single_integrator}
\end{align}

The initial and terminal sets are labeled as $\bX_{01},\bX_{02}$ and $\bX_{T1}$. These sets are described as follows. 
\begin{itemize}
    \item $\bX \triangleq \{\bx \in \mR^2 : -3 \le x_1 \le 3,\; -3 \le x_2 \le 3 \}$
    \item $\bX_{01} \triangleq \{\bx \in \mR^2 : 1 - (x_1 - 1.5)^{2} - (x_2 + 2.25)^2 \ge 0 \}$
    \item $\bX_{02} \triangleq \{\bx \in \mR^2 : 1 - (x_1 + 1.5)^{2} - (x_2 + 2.25)^2 \ge 0 \}$
    \item $\bX_{T1} \triangleq \{\bx \in \mR^2 : 1 - x_1^{2} - (x_2 - 2)^2 \ge 0 \}$
    \item $\bX_{u1} \triangleq \{\bx \in \mR^2 : 1 - x_1^{2} - x_2^2 \ge 0 \}$
\end{itemize}
The linear operators are approximated using $625$ RBF and the time-step of discretization $\Delta t = 0.01$. The value of $\gamma$ is 0.6. We have obtained the results for the traversability maps labeled by $b_1(\bx)$. The optimal navigation problem given in \eqref{ocp_dens} - \eqref{feas221} can be utilized to avoid any particular predetermined region, such as an enemy camp, lake, or pits. In Fig.~\ref{si_b1_obs}, we show plot for initial conditions in initial sets $\bX_{01}$ and $\bX_{02}$ navigating successfully to terminal set $\bX_{T1}$ with traversability map given by $b_1(\bx)$. We notice that the trajectories navigate through a region where the traversability map takes smaller values while trying to avoid hard obstacle set $\bX_{u1}$.\\

\begin{figure}[ht]
  \centering
  \includegraphics[width=0.9\linewidth]{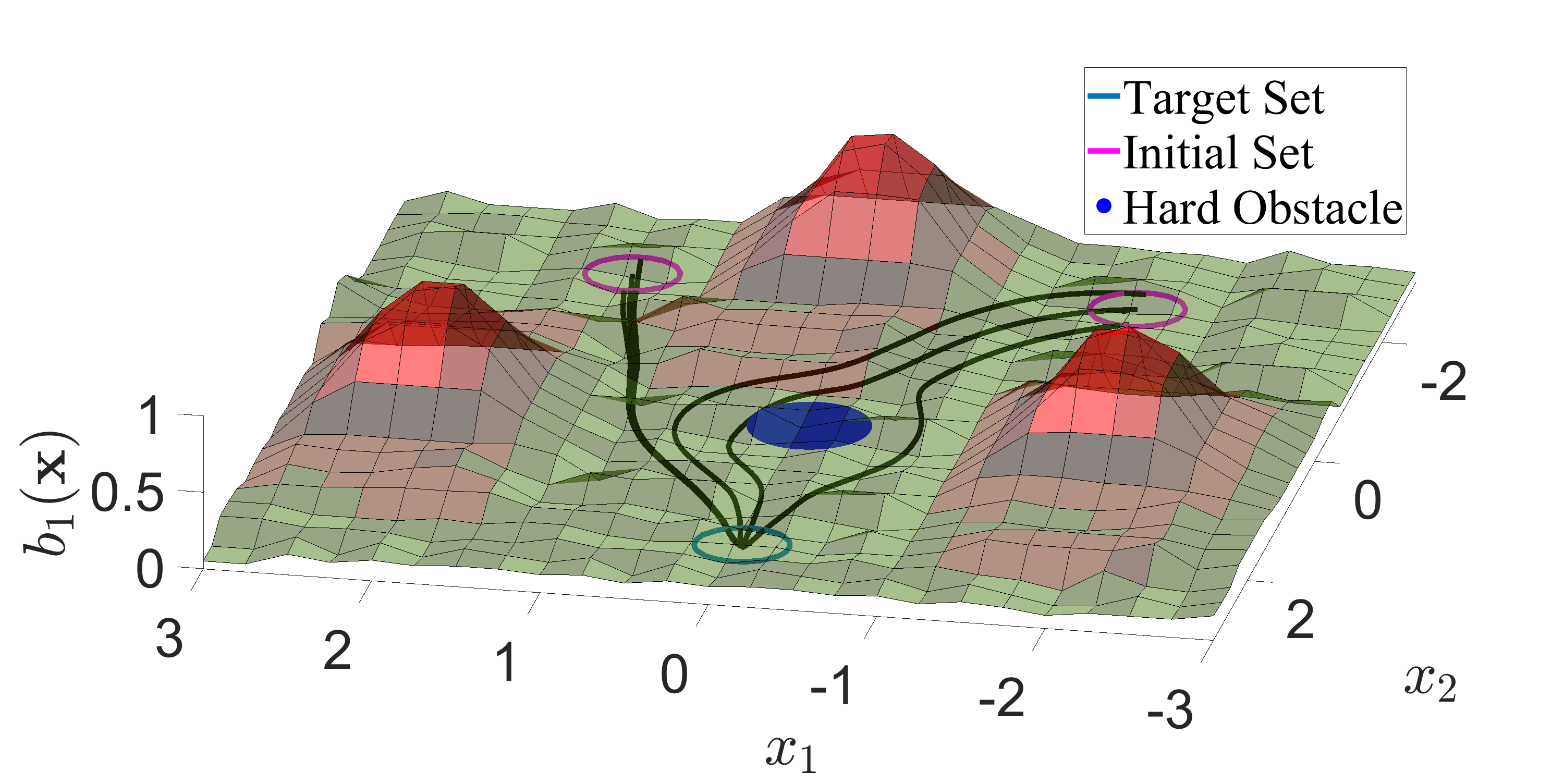}
\caption{Optimal navigation with initial points centered around $(1.5,-2.25)$ in initial set $\bX_{01}$ and centered around $(-1.5,-2.25)$ in initial set $\bX_{02}$ to terminal set $\bX_{T1}$ with traversability map given by $b_1(\bx)$.The traversability map includes hard obstacle centered around $(0,0)$.}
\label{si_b1_obs}
\end{figure}
\begin{figure}[ht]
  \centering
  \includegraphics[width=0.9\linewidth]{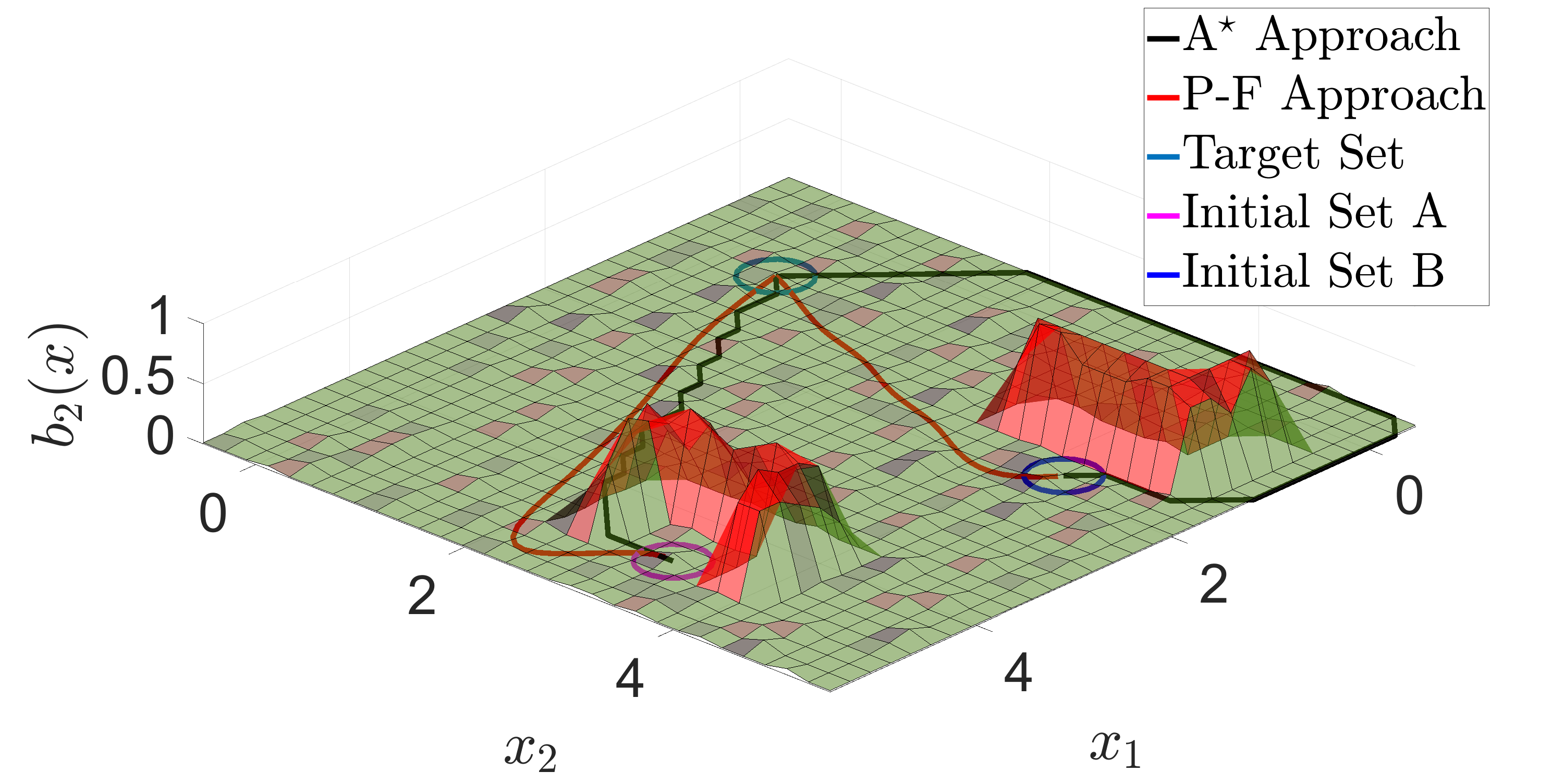}
\caption{Comparison of P-F approach with A$^{\star}$ algorithm with traversability map given by $b_2(\bx)$. The trajectories start from two initial points: point A $(4.5,3.2)$ and point B $(1.7,4.2)$ and end at target point given by $(0.7,0.5)$.}
\label{astar}
\end{figure}

\noindent \textbf{Example 2.} Consider the Dubin's car dynamics 
\begin{align}
    \dot{x}_1 = u_1\;\cos\theta, \;\;\; \dot{x}_2 = u_1 \; \sin\theta \;\;\;  \dot\theta= u_2 \label{eq:dubin}
\end{align}
The initial and terminal sets are labeled as $\bX_{03}$ and $\bX_{T3}$. These sets are described as follows.  
\begin{itemize}
    \item $\bX \triangleq \{\bx \in \mR^3 : -3 \le x_1 \le 9,\; -3 \le x_2 \le 9,\; -3 \le x_3 \le 3 \}$
    \item $\bX_{03} \triangleq \{\bx \in \mR^2 : 1 - (x_1 - 5.5)^{2} - (x_2 - 7.5)^2 \ge 0 \}$
    \item $\bX_{T3} \triangleq \{\bx \in \mR^2 : 1 - x_1^{2} - x_2^2 \ge 0 \}$
    \item $\bX_{u5} \triangleq \{\bx \in \mR^2 : 1 - (x_1 - 3)^{2} - (x_2 - 4)^2 \ge 0 \}$
\end{itemize}
The linear operators are approximated using $2500$ RBF and the time-step of discretization $\Delta t = 0.01$. The value of $\gamma$ is $0.7$. We have obtained the results for the traversability map given by $b_3(\bx)$.  We show the optimal navigation plots with a hard obstacle set $\bX_{u5}$  in Fig. \ref{obstacle_1}. The control plots for the trajectories given in Fig. \ref{obstacle_1} are displayed in Fig. \ref{control1}. The control limits are set to be $\pm 3$ and are seen to be met by the control.\\

The problem of data-driven approximation of linear operators is an active area of research. There are mainly two sources of errors: first, due to the finite amount of data, and second, due to the finitely many basis functions used in the approximation. In \cite{klus2020eigendecompositions,korda2018convergence,hou2021sparse}, sample complexity results for the finite-dimensional approximation of the linear operator are developed, where for a fixed number of basis functions, the approximation error is shown to decay as $\frac{1}{\sqrt{M}}$, where $M$ is the number of data points. Characterizing approximation error as the function of the number of basis functions is difficult as the results will depend on the underlying system dynamics. The sample complexity-based error bounds for the approximation of linear operator can be used to characterize the convergence rate of the solution obtained from the finite-dimensional approximate optimization problem to the true optimal solution \cite{vaidya2022data} due to the proposed convex formulation of the optimization problem.  \\

The computational framework for the finite-dimensional approximation of linear operators suffers from scalability issues for a system involving high dimensional state space. However, there are ways to address the computational burden for the system with reasonable (5-10) state-space dimensions. In particular, methods based on exploiting the sparsity structure of the linear operators, spectral properties of linear operators, and using the back-stepping approach for control design are currently being investigated to extend the applicability of the results to the system of practical interest \cite{sinha2019computation,hongzesparse,vaidya2022spectral}.


\begin{table}[ht]
\centering
\caption{P-F operator Computational time readings}
\begin{tabular} {ccc}
\hline
\# RBF & NSDMD &Approximate NSDMD\\
\hline
\hline
500 & 18.7 sec & 10.2 sec\\
1000 & 99.88 sec & 53.86 sec\\
1500 & 400.2 sec & 162.97 sec\\
\hline
\hline
\end{tabular}
\label{table:comp time}
\end{table}

\begin{figure}[ht]
  \centering
  \includegraphics[width=0.9\linewidth]{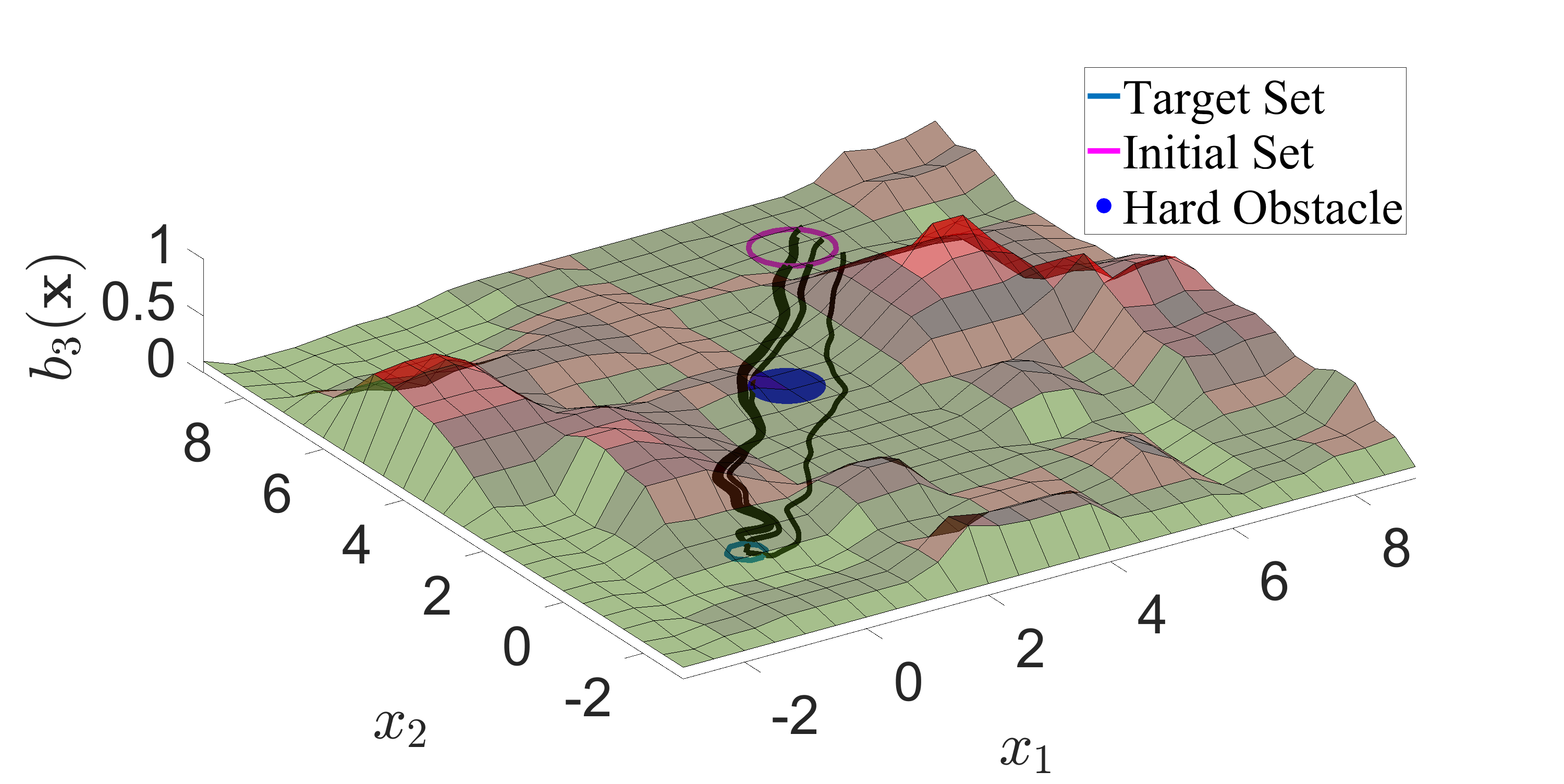}
\caption{Optimal navigation with initial points centered around $(5.5,7.5)$ in initial set $\bX_{04}$ to terminal set $\bX_{T3}$ with traversability map given by $b_3(\bx)$.The traversability map includes hard obstacles centered around $(3,4)$.}
\label{obstacle_1}
\end{figure}

\begin{figure}[ht]
  \centering
  \includegraphics[width=0.9\linewidth]{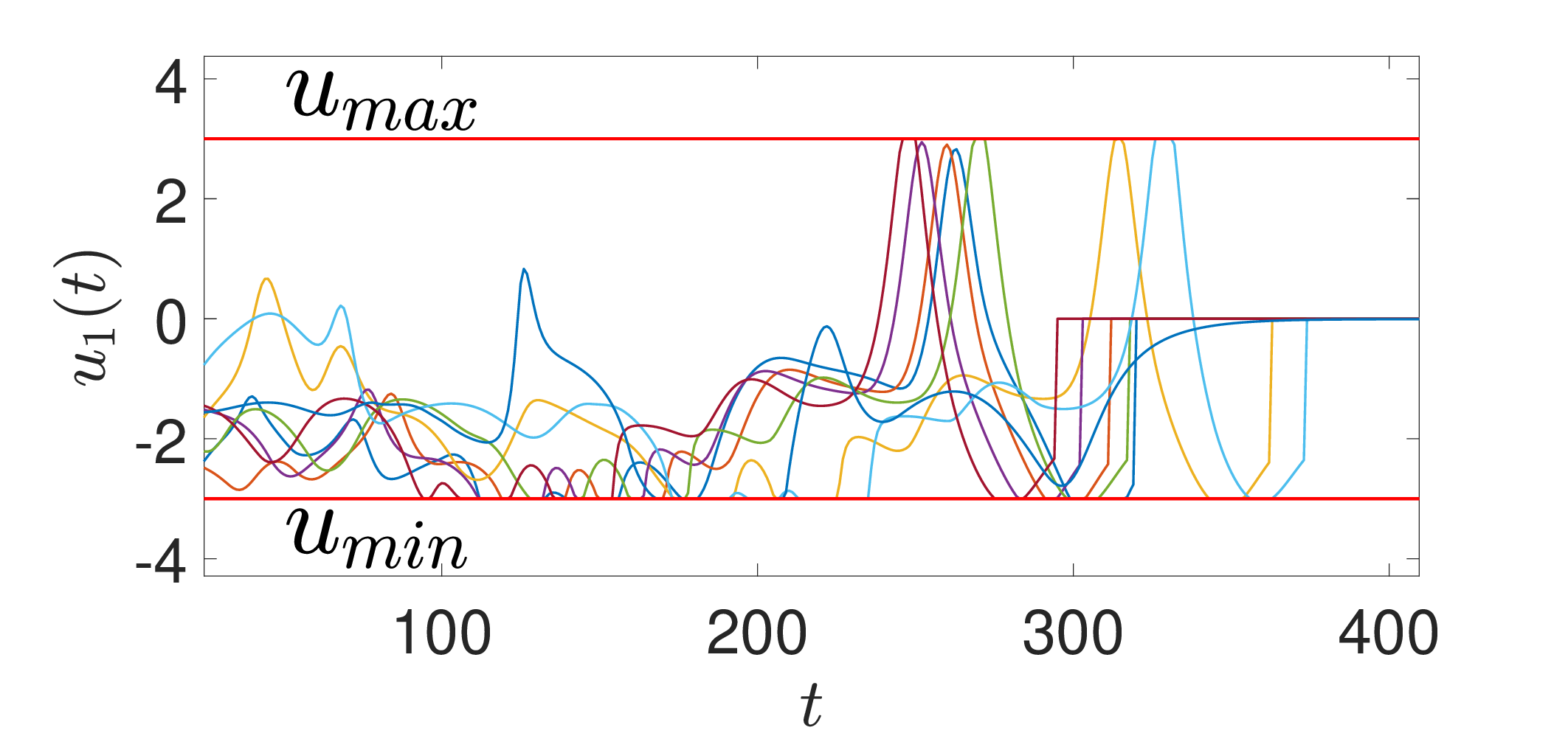}
\caption{Optimal control plots  for $u_1$ corresponding to different initial conditions with the configuration of initial and final sets as in  Fig. \ref{obstacle_1}.}
\label{control1}
\end{figure}

We also compare our approach with A$^{\star}$ algorithm to show its efficiency over the existing approach. We use MATLAB 2021b as the simulation platform for the comparison. The A$^{\star}$ algorithm \cite{hong2021improved} is a search-based algorithm for path planning which uses heuristic functions for selective search in the workspace. The cost function of A$^{\star}$ algorithm consists of two terms: $g(\bx)$, the actual cost, and $h(\bx)$, the heuristic cost. Here, $g(\bx)$ represents the transition cost from the initial state to the current state, whereas $h(\bx)$ represents the cost to go from the current state to the target state. When A$^{\star}$ is implemented with a grid map, then $g(\bx)$ would represent the grid values. For our comparison purposes, we selected $g(\bx)$ to be equal to the traversability map $b(\bx)$ and $h(\bx)$ to be equal to the Euclidean distance between the current state and target state. The trajectory comparison between  A$^{\star}$ algorithm and the P-F approach is shown in Fig.~\eqref{astar}. Here,$\gamma = 10$ for the P-F-based approach. Table~\ref{table:comparison1} contains the traversability cost of trajectories obtained from A$^{\star}$ algorithm and the P-F approach. The traversability cost is calculated by summing the traversability values, $b(\bx(t))$, along the trajectories generated by the A$^\star$ and P-F based approach. 

 \begin{table}[ht]
        \centering
        \caption{Comparison between A$^{\star}$ and P-F approach}
        \begin{tabular} {ccccc}
        \hline
        Traversability & Starting & Target & \multicolumn{2}{c}{Traversability Cost}\\
        \hhline{~~~--}
        map & point & point & A$^{\star}$ & P-F approach\\
        \hline
        \hline
        \multirow{1}{*}{$b_2(\bx)$ Fig.~\ref{astar}} & (4.5,3.2) set A&(0.7,0.5)& 8.24 & 6.1\\
        \hhline{~----}
        & (1.7,4.2) set B&(0.7,0.5)&9.34&4.3\\
        \hline
        \hline
        \end{tabular}
        \label{table:comparison1}
    \end{table}
We observe that our propsoed approach performs better than the A$^{\star}$ algorithm as it leads to lower cost of traversability. 

\section{Conclusion}
The problem of navigation on off-road terrain is considered. We use traversability measures to describe the relative degree of difficulty of navigation. 
A convex formulation for the optimal navigation problem is constructed using the traversability information of the terrain. The convex formulation leads to an infinite-dimensional convex optimization problem for navigation. Furthermore, we utilize the data-driven approximation of the linear P-F operator for the finite-dimensional approximation of the optimization problem. Finally, simulation results are showcased to show the validity of the proposed method. Future research efforts will incorporate the framework's uncertainty arising from vehicle dynamics and vehicle-terrain interaction.
\section{Appendix}
\begin{proof}
We use the fact that $d\mu_0=h_0(\bx)d\bx$, and the  definition of the Koopman operator to write the cost function (\ref{problem_a}) as 
\begin{align}
J = \int_{\bX_1} \int^{\infty}_0 [\mathds{U}_t(q + \bk^\top \bR \bk)](\bx) \;dth_0(\bx) d\bx \nonumber\\
=\int_{\bX_1} \int^{\infty}_0 (q + \bk^\top \bR \bk)(\bs_{t}(\bx)) \;dth_0(\bx) d\bx \nonumber
\end{align}
Performing change of variable $\by=\bs_t(\bx)$ or $\bx=\bs_{-t}(\by)$ and using the definition of P-F operator, we can write the above as 
\begin{align}
J &= \int_{\bX_1} \int^{\infty}_0 (q + \bk^\top \bR \bk)(\bx)[\mathds{P}^c_th_0] \;dt d\bx\nonumber\\
&=\int_{\bX_1} (q + \bk^\top \bR \bk)\rho(\bx)d\bx \label{cost11}
\end{align}
where 
\begin{align}\rho(\bx):=\int_0^\infty [\mP_t h_0](\bx)dt.\label{rho_def}\end{align} Since the cost function is assumed to be finite (Assumption \ref{assume1}), the $\rho(\bx)$ is well-defined for a.e. $\bx$ and is an integrable function. The function $[\mP_t h_0](\bx)$ is a uniformly continuous function of time which can be inferred from the definition of the P-F operator and the assumption made on function $h_0$. Hence using Barbalat Lemma \cite[pg. 269]{barbalat1959systemes} we have 
\begin{eqnarray}
\lim_{t\to \infty}[\mP_t h_0](\bx)=0,\label{convergence1}
\end{eqnarray}
for a.e. $\bx$.
Furthermore, since the P-F operator is a positive operator 
and $h_0(\bx)\geq 0$, we have $\rho(\bx)\geq 0$. Hence (\ref{cost11}) can be written as 
\begin{align}
J(\mu_0) &=\int_{\bX_1} (q(\bx)\rho(\bx)+ \frac{\bar \brho(\bx)^\top\bR \bar \brho(\bx)}{\rho(\bx)})d\bx \label{cost111}
\end{align}
where $\bar \brho(\bx):=\rho(\bx)\bk(\bx)$. This shows that we can write the cost function in (\ref{problem_B}) in the form (\ref{convex_formulation}). We next discuss the constraints. Following the results of Lemma \ref{lemma_1}, we know that (\ref{lemmaeq1}) implies (\ref{contra}). Hence, the (\ref{problem_b}) are implied by 
\begin{align}
\int_{0}^\infty \int_{\bX} \mathds{1}_{\bX_u}(\bs_t(\bx))d\mu_0(\bx)dt=0\label{ll}
\end{align}
Again performing change of variables $\by=\bs_t(\bx)$ and using the definition of P-F operator, (\ref{ll}) can be written as 
\[\int_{\bX_1}\mathds{1}_{\bX_u}(\bx)\rho(\bx)d\bx=0\]
with $\rho(\bx)$ as defined in (\ref{rho_def}). Similarly, it follows that (\ref{problem_c}) can be written as
\[\int_{\bX_1}b(\bx)\rho(\bx)d\bx\leq\gamma. \]
The control constraints $|u_j|\leq L_j$ for $j=1,\ldots, m$ can be written as $|\bar\rho_j(\bx)|\leq L_j \rho(\bx)$ follows from the fact that $\bu=\bk(\bx)=\bar \brho(\bx)/\rho(\bx)$ and $\rho(\bx)$ is positive. We next show that the $\rho(\bx)$ as defined in (\ref{rho_def}) satisfies the constraints (\ref{feas221}). 
Substituting (\ref{rho_def}) in the constraint of (\ref{feas221}), we obtain
\begin{eqnarray}
&&\nabla\cdot ({\bf F}_c(\bx) \rho(\bx))=\int_0^\infty \nabla\cdot ({\bf F}_c(\bx)[\mathbb{P}_t h_0](\bx)) dt\nonumber\\
&=&\int_0^\infty -\frac{d}{dt}[\mathbb{P}_t h_0](\bx)dt=-[\mathbb{P}_th_0](\bx)\Big|^{\infty}_{t=0}=h_0(\bx),\nonumber\\\label{eq11}
\end{eqnarray}
In deriving (\ref{eq11}) we have used the infinitesimal generator property of P-F operator Eq. (\ref{eq:def_PF_generator}) and the fact that $\lim_{t\to \infty} [\mathbb{P}_t h_0](\bx)=0$  following (\ref{convergence1}). Next, we show that the target set $\bX_T$ is a.e. stable w.r.t measure $\mu_0$ supported on the initial set $\bX_0$. Consider the set $S_\ell$ 
\[S_\ell=\{\bx\in \bX_1: \bs_t(\bx)\in \bX_1,\;{\rm for\;some}\;t>\ell\}\]
and let $S=\bigcap_{\ell=1}^\infty S_\ell$.
The set $S$ contains points, some of whose limit points lie in $\bX_1$, and for almost every stability of the target set, we need to show that $\mu_0(S)=0$. From the construction of the set $S$, it follows that $\bs_t(S)=S$, where $\bs_t(S)=\{\bs_t(\bx) :\bx\in S \}$.
\begin{align}\mu_0(S)=\int_{\bX_1}\mathds{1}_{S}(\bx)h_0(\bx)d\bx=\int_{\bX_1}\mathds{1}_{S}(\bs_t(\bx))h_0(\bx)d\bx\nonumber\\
=\int_{\bX_1}\mathds{1}_{S}(\bx)[\mP_t h_0](\bx)d\bx
\end{align}
where we have use the fact that $\bx\in S$ iff $\bs_t(\bx)\in S$. Since the above is true for all $t\geq 0$, we obtain using dominated convergence theorem
\[\mu_0(S)=\lim_{t\to \infty}\int_{S}[\mP_t h_0](\bx)d\bx=\int_{S}\lim_{t\to \infty}[\mP_t h_0](\bx)d\bx=0\]
\end{proof}
\bibliographystyle{IEEETRAN}
\bibliography{reference}

\begin{thebibliography}{10}
\providecommand{\url}[1]{#1}
\csname url@samestyle\endcsname
\providecommand{\newblock}{\relax}
\providecommand{\bibinfo}[2]{#2}
\providecommand{\BIBentrySTDinterwordspacing}{\spaceskip=0pt\relax}
\providecommand{\BIBentryALTinterwordstretchfactor}{4}
\providecommand{\BIBentryALTinterwordspacing}{\spaceskip=\fontdimen2\font plus
\BIBentryALTinterwordstretchfactor\fontdimen3\font minus
  \fontdimen4\font\relax}
\providecommand{\BIBforeignlanguage}[2]{{%
\expandafter\ifx\csname l@#1\endcsname\relax
\typeout{** WARNING: IEEEtran.bst: No hyphenation pattern has been}%
\typeout{** loaded for the language `#1'. Using the pattern for}%
\typeout{** the default language instead.}%
\else
\language=\csname l@#1\endcsname
\fi
#2}}
\providecommand{\BIBdecl}{\relax}
\BIBdecl

\bibitem{hirsch2012differential}
M.~W. Hirsch, S.~Smale, and R.~L. Devaney, \emph{Differential equations,
  dynamical systems, and an introduction to chaos}.\hskip 1em plus 0.5em minus
  0.4em\relax Academic press, 2012.

\bibitem{koditschek1990robot}
D.~E. Koditschek and E.~Rimon, ``Robot navigation functions on manifolds with
  boundary,'' \emph{Advances in applied mathematics}, vol.~11, no.~4, pp.
  412--442, 1990.

\bibitem{khatib1985real}
O.~Khatib, ``Real-time obstacle avoidance for manipulators and mobile robots,''
  in \emph{Proceedings. 1985 IEEE International Conference on Robotics and
  Automation}, vol.~2.\hskip 1em plus 0.5em minus 0.4em\relax IEEE, 1985, pp.
  500--505.

\bibitem{latombe2012robot}
J.-C. Latombe, \emph{Robot motion planning}.\hskip 1em plus 0.5em minus
  0.4em\relax Springer Science \& Business Media, 2012, vol. 124.

\bibitem{ames2019control}
A.~D. Ames, S.~Coogan, M.~Egerstedt, G.~Notomista, K.~Sreenath, and P.~Tabuada,
  ``Control barrier functions: Theory and applications,'' in \emph{2019 18th
  European control conference (ECC)}.\hskip 1em plus 0.5em minus 0.4em\relax
  IEEE, 2019, pp. 3420--3431.

\bibitem{langer1994behavior}
D.~Langer, J.~Rosenblatt, and M.~Hebert, ``A behavior-based system for off-road
  navigation,'' \emph{IEEE Transactions on Robotics and Automation}, vol.~10,
  no.~6, pp. 776--783, 1994.

\bibitem{overbye2020fast}
T.~Overbye and S.~Saripalli, ``Fast local planning and mapping in unknown
  off-road terrain,'' in \emph{2020 IEEE International Conference on Robotics
  and Automation (ICRA)}.\hskip 1em plus 0.5em minus 0.4em\relax IEEE, 2020,
  pp. 5912--5918.

\bibitem{gao2019off}
B.~Gao, A.~Xu, Y.~Pan, X.~Zhao, W.~Yao, and H.~Zhao, ``Off-road drivable area
  extraction using 3d lidar data,'' in \emph{2019 IEEE Intelligent Vehicles
  Symposium (IV)}.\hskip 1em plus 0.5em minus 0.4em\relax IEEE, 2019, pp.
  1505--1511.

\bibitem{yang2020multi}
Y.~Yang, D.~Tang, D.~Wang, W.~Song, J.~Wang, and M.~Fu, ``Multi-camera visual
  slam for off-road navigation,'' \emph{Robotics and Autonomous Systems}, vol.
  128, p. 103505, 2020.

\bibitem{hussein2016autonomous}
A.~Hussein, P.~Mar{\'\i}n-Plaza, D.~Mart{\'\i}n, A.~de~la Escalera, and J.~M.
  Armingol, ``Autonomous off-road navigation using stereo-vision and
  laser-rangefinder fusion for outdoor obstacles detection,'' in \emph{2016
  IEEE Intelligent Vehicles Symposium (IV)}.\hskip 1em plus 0.5em minus
  0.4em\relax IEEE, 2016, pp. 104--109.

\bibitem{huang2022convex}
B.~Huang and U.~Vaidya, ``A convex approach to data-driven optimal control via
  perron-frobenius and koopman operators,'' \emph{IEEE Transactions on
  Automatic Control}, 2022.

\bibitem{moyalan2023data}
J.~Moyalan, H.~Choi, Y.~Chen, and U.~Vaidya, ``Data-driven optimal control via
  linear transfer operators: A convex approach,'' \emph{Automatica}, vol. 150,
  p. 110841, 2023.

\bibitem{vaidya2018optimal}
U.~Vaidya, ``Optimal motion planning using navigation measure,''
  \emph{International Journal of Control}, vol.~91, no.~5, pp. 989--998, 2018.

\bibitem{yu2022data}
H.~Yu, J.~Moyalan, U.~Vaidya, and Y.~Chen, ``Data-driven optimal control of
  nonlinear dynamics under safety constraints,'' \emph{IEEE Control Systems
  Letters}, 2022.

\bibitem{moyalan2022navigation}
J.~Moyalan, Y.~Chen, and U.~Vaidya, ``Navigation with probabilistic safety
  constraints: A convex formulation,'' in \emph{2022 American Control
  Conference (ACC)}.\hskip 1em plus 0.5em minus 0.4em\relax IEEE, 2022, pp.
  826--830.

\bibitem{xiao2021learning}
X.~Xiao, J.~Biswas, and P.~Stone, ``Learning inverse kinodynamics for accurate
  high-speed off-road navigation on unstructured terrain,'' \emph{IEEE Robotics
  and Automation Letters}, vol.~6, no.~3, pp. 6054--6060, 2021.

\bibitem{gregory2021improving}
J.~M. Gregory, G.~Warnell, J.~Fink, and S.~K. Gupta, ``Improving trajectory
  tracking accuracy for faster and safer autonomous navigation of ground
  vehicles in off-road settings,'' in \emph{2021 IEEE International Symposium
  on Safety, Security, and Rescue Robotics (SSRR)}.\hskip 1em plus 0.5em minus
  0.4em\relax IEEE, 2021, pp. 204--209.

\bibitem{huang2021decentralized}
H.~Huang, A.~V. Savkin, and C.~Huang, ``Decentralized autonomous navigation of
  a uav network for road traffic monitoring,'' \emph{IEEE Transactions on
  Aerospace and Electronic Systems}, vol.~57, no.~4, pp. 2558--2564, 2021.

\bibitem{armbrust2011ravon}
C.~Armbrust, T.~Braun, T.~F{\"o}hst, M.~Proetzsch, A.~Renner, B.-H.
  Sch{\"a}fer, and K.~Berns, ``Ravon: The robust autonomous vehicle for
  off-road navigation,'' in \emph{Using Robots in Hazardous
  Environments}.\hskip 1em plus 0.5em minus 0.4em\relax Elsevier, 2011, pp.
  353--396.

\bibitem{bagheri2017development}
N.~Bagheri, ``Development of a high-resolution aerial remote-sensing system for
  precision agriculture,'' \emph{International journal of remote sensing},
  vol.~38, no. 8-10, pp. 2053--2065, 2017.

\bibitem{vaidya2008lyapunov}
U.~Vaidya and P.~G. Mehta, ``Lyapunov measure for almost everywhere
  stability,'' \emph{IEEE Transactions on Automatic Control}, vol.~53, no.~1,
  pp. 307--323, 2008.

\bibitem{lasota1998chaos}
A.~Lasota and M.~C. Mackey, \emph{Chaos, fractals, and noise: stochastic
  aspects of dynamics}.\hskip 1em plus 0.5em minus 0.4em\relax Springer Science
  \& Business Media, 1998, vol.~97.

\bibitem{balluchi1996path}
A.~Balluchi, A.~Bicchi, A.~Balestrino, and G.~Casalino, ``Path tracking control
  for dubin's cars,'' in \emph{Proceedings of IEEE International Conference on
  Robotics and Automation}, vol.~4.\hskip 1em plus 0.5em minus 0.4em\relax
  IEEE, 1996, pp. 3123--3128.

\bibitem{huang2018data}
B.~Huang and U.~Vaidya, ``Data-driven approximation of transfer operators:
  Naturally structured dynamic mode decomposition,'' in \emph{2018 Annual
  American Control Conference (ACC)}.\hskip 1em plus 0.5em minus 0.4em\relax
  IEEE, 2018, pp. 5659--5664.

\bibitem{korda2018convergence}
M.~Korda and I.~Mezi{\'c}, ``On convergence of extended dynamic mode
  decomposition to the koopman operator,'' \emph{Journal of Nonlinear Science},
  vol.~28, no.~2, pp. 687--710, 2018.

\bibitem{klus2020eigendecompositions}
S.~Klus, I.~Schuster, and K.~Muandet, ``Eigendecompositions of transfer
  operators in reproducing kernel hilbert spaces,'' \emph{Journal of Nonlinear
  Science}, vol.~30, no.~1, pp. 283--315, 2020.

\bibitem{hou2021sparse}
B.~Hou, S.~Bose, and U.~Vaidya, ``Sparse learning of kernel transfer
  operators,'' in \emph{2021 55th Asilomar Conference on Signals, Systems, and
  Computers}.\hskip 1em plus 0.5em minus 0.4em\relax IEEE, 2021, pp. 130--134.

\bibitem{vaidya2022data}
U.~Vaidya and D.~Tellez-Castro, ``Data-driven stochastic optimal control using
  linear transfer operators,'' \emph{Submitted for publication in IEEE TAC},
  2022.

\bibitem{sinha2019computation}
S.~Sinha, U.~Vaidya, and E.~Yeung, ``On computation of koopman operator from
  sparse data,'' in \emph{2019 American Control Conference (ACC)}.\hskip 1em
  plus 0.5em minus 0.4em\relax IEEE, 2019, pp. 5519--5524.

\bibitem{hongzesparse}
H.~Yu, J.~Moyalan, U.~Vaidya, and Y.~Chen, ``Data-driven optimal control under
  safety constraints using sparse koopman approximation,'' \emph{Accepted for
  publication in IEEE ICRA}, 2023.

\bibitem{vaidya2022spectral}
U.~Vaidya, ``Spectral analysis of koopman operator and nonlinear optimal
  control,'' in \emph{2022 IEEE 61st Conference on Decision and Control
  (CDC)}.\hskip 1em plus 0.5em minus 0.4em\relax IEEE, 2022, pp. 3853--3858.

\bibitem{hong2021improved}
Z.~Hong, P.~Sun, X.~Tong, H.~Pan, R.~Zhou, Y.~Zhang, Y.~Han, J.~Wang, S.~Yang,
  and L.~Xu, ``Improved a-star algorithm for long-distance off-road path
  planning using terrain data map,'' \emph{ISPRS International Journal of
  Geo-Information}, vol.~10, no.~11, p. 785, 2021.

\bibitem{barbalat1959systemes}
I.~Barbalat, ``Systemes d’{\'e}quations diff{\'e}rentielles d’oscillations
  non lin{\'e}aires,'' \emph{Rev. Math. Pures Appl}, vol.~4, no.~2, pp.
  267--270, 1959.

\end{thebibliography}
\end{document}